\documentclass[aps,prl,twocolumn,superscriptaddress]{revtex4-2}
\usepackage[utf8]{inputenc}
\usepackage[T1]{fontenc}
\usepackage{lmodern}
\usepackage{amsmath,amsthm,amssymb,amsfonts}
\usepackage{graphicx,bm}
\usepackage{hyperref}
\usepackage{physics}

\hypersetup{
    colorlinks=true,
    linkcolor=magenta,
    citecolor=blue,
    urlcolor=blue}

\newtheorem{theorem}{Theorem}

\begin{document}

\author{Chakir Fikri}
\affiliation{Independent Researcher}
\email{chakir.fikri.research@proton.me}

\date{\today}

\title{Structural constraint on delayed-choice quantum eraser architectures}

\begin{abstract}
Delayed-choice quantum eraser (DCQE) experiments are often presented as challenging classical causal intuitions by correlating detection events with choices implemented at later times. While it is well understood that post-selection plays a crucial role in producing the observed interference patterns, the structural features underlying such correlations are rarely analyzed within a unified framework. In this work, we introduce a simple probabilistic constraint applicable to idealized DCQE architectures. We show that four intuitive properties—statistical independence of the choice, absence of losses, deterministic routing conditioned on the choice, and distinct conditional detection distributions—cannot be simultaneously satisfied. This incompatibility provides a transparent classification of DCQE schemes and clarifies how conditional interference patterns arise without invoking exotic mechanisms.
\end{abstract}

\maketitle

\section{Introduction}

Quantum entanglement enables correlations between distant or temporally separated events that challenge classical causal intuitions. In spatially separated systems, such correlations are quantitatively characterized by the violation of Bell inequalities \cite{Bell1964, Aspect1982, Gisin1998, Hensen2015, bigbell2018}. In temporal scenarios, delayed-choice experiments explore how measurement outcomes appear correlated with experimental choices implemented at later times \cite{Jacques2007}. 

Among these, delayed-choice quantum eraser (DCQE) experiments occupy a prominent place. In a typical DCQE setup, two entangled particles are generated simultaneously. One particle is detected at a spatially resolved detector, while the other undergoes a measurement whose configuration—often interpreted as erasing or preserving which-path information—is chosen after the first detection event \cite{Scully1982, Kim2000, Ma2016}. When coincidence counts are analyzed, interference patterns are observed for certain subsets of outcomes and disappear for others.

Although it is well established that such patterns arise from conditional statistics and post-selection rather than any form of retrocausal influence \cite{Ma2016, SEP_Retrocausality}, DCQE experiments are still frequently discussed in language suggesting a tension with standard causal explanations. This motivates a closer examination of the structural assumptions implicitly attributed to idealized DCQE scenarios, independently of any specific physical implementation.

More precisely, we argue that the appearance of paradox in DCQE scenarios results from implicitly combining several intuitive structural assumptions that cannot, in fact, be simultaneously satisfied. From this perspective, the apparent tension associated with DCQE experiments can be understood as a structural illusion arising from mutually incompatible expectations rather than as evidence for any unconventional causal mechanism.

Several authors have argued that the apparent paradoxical features of DCQE experiments disappear once the physical and causal structure of the experiment is analyzed explicitly. 
Kastner \cite{Kastner2019} emphasizes that DCQE experiments are structurally equivalent to standard EPR-type correlation experiments and therefore involve neither a genuine erasure of previously available which-path information nor any delayed physical influence. 
Fankhauser \cite{Fankhauser2020} shows that the correlations observed in DCQE experiments can be consistently described within both collapse interpretations of quantum mechanics and de Broglie–Bohm theory without invoking any backward-in-time causal influence. 
Waaijer and van Neerven \cite{Waaijer2024} provides a step-by-step forward-time analysis of delayed-choice experiments showing that delayed and non-delayed configurations lead to the same final quantum state within standard quantum mechanics, thereby removing the appearance of retrocausal behavior at the operational level without requiring interpretational assumptions about the ontic or epistemic status of the quantum state. 
Qureshi \cite{Qureshi2020, Qureshi2021, Qureshi2025} has argued that the apparent availability of which-path information in delayed configurations may already be constrained by the structure of the joint entangled state prior to the final detection stage, so that the delayed-choice configuration does not correspond to a genuine late-time operational selection between complementary measurement contexts.

The present work addresses a complementary question. Rather than focusing on the temporal interpretation of delayed-choice operations or on the status of which-path information prior to coincidence post-selection, we investigate minimal probabilistic compatibility relations between structural properties commonly attributed to idealized DCQE scenarios expressed in terms of observable random variables describing the experimental architecture. The resulting incompatibility applies independently of how the delayed-choice configuration is interpreted at the level of conditional quantum states and does not rely on the dynamical structure of quantum mechanics itself. Instead, it follows from elementary probabilistic constraints relating experimentally accessible variables.

Specifically, we identify four conditions that formalize common expectations: statistical independence between the experimental choice and the earlier detection event, absence of losses, deterministic association between choices and detection outcomes, and the existence of distinct conditional detection distributions corresponding to interference and non-interference patterns. We show that these conditions are mutually incompatible.

This simple structural constraint does not rely on quantum dynamics or specific experimental details. Instead, it provides a unifying perspective that clarifies why realistic DCQE implementations necessarily violate at least one of these intuitive properties. We then apply this framework to several representative DCQE architectures, illustrating how losses, coarse-graining, or nondeterministic routing account for the observed correlations.

Our analysis is not intended to resolve interpretational debates, but rather to provide a transparent classification of DCQE schemes and to identify the minimal structural ingredients responsible for their behavior. In this way, it clarifies why the apparent paradoxical features of DCQE experiments arise naturally in idealized descriptions but disappear once the underlying compatibility constraints are made explicit.

\section{Formal framework and definitions}

\subsection{Probabilistic framework}

We consider a generic DCQE scenario involving two correlated systems, denoted $S$ and $I$. System $S$ is detected at a spatially resolved detector (or an equivalent scanning detection device), producing a random variable $X$ that labels the detection position. System $I$ is subjected to a measurement configuration chosen at a later time and subsequently detected.

To describe such experiments at a minimal probabilistic level, we introduce three random variables defined on a common probability space:

\begin{itemize}
    \item $X$, representing the detection outcome of system $S$ at the spatially resolved detector;
    \item $C$, representing the experimental choice implemented on system $I$. For simplicity, we take $C \in \{\mathrm{erase},\mathrm{preserve}\}$;
    \item $D$, representing the detection outcome associated with system $I$.
\end{itemize}

In realistic implementations, multiple detectors may be associated with each experimental configuration. It is therefore useful to distinguish between fine-grained detection outcomes, corresponding to individual physical detectors, and coarse-grained outcomes that group detectors according to the experimental choice they are intended to implement. Throughout this work, $D$ may denote either level of description, with the relevant coarse-graining explicitly stated when necessary.

We now formalize four intuitive properties that are often implicitly attributed to idealized DCQE scenarios.
\begin{itemize}
\item \textit{Independence of the experimental choice:} 
$X \perp C$. This condition expresses the assumption that the choice of measurement configuration applied to system $I$ is statistically independent of the earlier detection outcome of system $S$. Operationally, it corresponds to the usual notion of a freely chosen experimental setting.

\item \textit{Absence of losses:}
All realizations of $S$ are assumed to be paired with a detection event of $I$. In probabilistic terms, $X$ and $D$ are jointly defined for every trial. No post-selection based on missing detection events is required.

\item \textit{Deterministic routing conditioned on the choice:}
We assume that the detection outcome of system $I$ is uniquely determined by the experimental choice: $D = f(C)$, for some deterministic function $f$. At the coarse-grained level, this corresponds to the idealized picture in which each choice $\mathrm{erase}$ or $\mathrm{preserve}$ leads to a distinct and exclusive detection channel.

\item \textit{Distinct conditional detection distributions:}
Finally, we assume that the detection statistics of $S$ conditioned on the outcome of $I$ differ for at least two detection outcomes: $P(X \in A \mid D = d) \neq P(X \in A \mid D = d')$, for some measurable set $A$ and some pair of outcomes $d \neq d'$. In DCQE experiments, this typically corresponds to the appearance of interference fringes for certain coincidence counts and their absence for others.
\end{itemize}

For convenience, we refer to a DCQE architecture satisfying all four properties above as an \emph{idealized DCQE}. These conditions formalize a simplified intuitive picture in which a freely chosen experimental configuration cleanly separates interference and non-interference outcomes without losses.
As we show in the following section, such an idealized scenario is structurally impossible: the four properties cannot be simultaneously satisfied.

\subsection{Relation to the Hilbert-space description of DCQE implementations}

Although the structural constraint derived in this work is formulated at a purely probabilistic level, the four properties introduced above admit natural counterparts in the Hilbert-space description of DCQE experiments.

In standard DCQE realizations, the joint system $(S,I)$ is described by a density operator $\rho$ evolving through unitary transformations followed by projective measurements associated with different detection configurations. In this framework, deterministic routing conditioned on the choice corresponds to the idealized situation in which the measurement operators associated with distinct configurations project onto orthogonal subspaces of the detector Hilbert space. Operationally, this means that detection events corresponding to different choices are described by mutually orthogonal projectors $\{\Pi_d\}$ satisfying $\Pi_d \Pi_{d'} = 0$ for $d \neq d'$.

Similarly, loss-induced post-selection arises when the erasure configuration involves projection onto a proper subspace of the available detection space. In this case, only a subset of incoming events contributes to coincidence statistics. At the level of the density operator, this corresponds to conditioning on the action of a non-unitary measurement operator $M$ with $M^\dagger M < \mathbb{I}$, so that the post-selected state takes the form
\[
\rho' = \frac{M \rho M^\dagger}{\mathrm{Tr}(M \rho M^\dagger)}.
\]

The existence of distinct conditional detection distributions is closely related to the presence of off-diagonal coherence terms in the joint density matrix prior to conditioning. 
More explicitly, consider a joint $(S,I)$ state of the form
\[
\rho = |\Psi\rangle\langle\Psi|, \qquad
|\Psi\rangle = \frac{1}{\sqrt{2}}\bigl(|s_1\rangle|i_1\rangle + |s_2\rangle|i_2\rangle\bigr).
\]
If the path states of system $I$ are orthogonal, $\langle i_1 | i_2 \rangle = 0$, tracing over system $I$ yields a reduced density operator for system $S$,
\[
\rho_s = \mathrm{Tr}_i(\rho)
= \frac{1}{2}\bigl(|s_1\rangle\langle s_1| + |s_2\rangle\langle s_2|\bigr),
\]
in which the off-diagonal coherence terms vanish. This accounts for the absence of interference in the marginal distribution of system $S$ independently of any coincidence post-selection.
In particular, interference patterns observed in coincidence counts arise from non-vanishing matrix elements between path subspaces that are recombined within the post-selected detection channel. If these off-diagonal terms vanish due to decoherence or path distinguishability, the conditional asymmetries analyzed in the present work disappear. Interference patterns therefore require that coherence between the relevant path subspaces be preserved prior to conditioning.

From this perspective, the probabilistic incompatibility identified here should be understood as a structural constraint applying to measurement architectures operating within a coherent Hilbert-space framework. When coherence between relevant path subspaces is preserved prior to conditioning, distinct conditional detection distributions become possible, but only at the price of relaxing at least one of the other structural assumptions identified above.

\section{Structural constraint}

We now establish a simple probabilistic incompatibility between the properties introduced in the previous section.

\begin{theorem}[No-go result]
Let $X$, $C$, and $D$ be random variables defined on a common probability space. The following three statements cannot be simultaneously satisfied:

(i) $X$ and $C$ are statistically independent: $X \perp C$;

(ii) $D$ is a deterministic function of $C$: $D = f(C)$;

(iii) There exist a measurable set $A$ and two distinct outcomes $d \neq d'$ such that $P(X \in A \mid D = d) \neq P(X \in A \mid D = d')$.
\end{theorem}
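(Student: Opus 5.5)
The plan is to derive a contradiction by showing that (i) and (ii) together force all conditional distributions of $X$ given $D$ to coincide with the marginal distribution of $X$, which rules out (iii). The central observation is that conditioning on $D=d$ is, under (ii), the same as conditioning on $C$ lying in a preimage set, and independence (i) passes to such events.

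First, I will fix an outcome $d$ for which the conditional probability in (iii) is defined, i.e.\ $P(D=d)>0$; conditioning on events of probability zero is not meaningful in (iii), so both outcomes $d,d'$ appearing there must have positive probability. By (ii), $\{D=d\} = \{C \in f^{-1}(d)\}$, where $f^{-1}(d)$ is a subset of the (finite) range $\{\mathrm{erase},\mathrm{preserve}\}$ of $C$, hence measurable. Then I will use (i) in its standard form: $X\perp C$ means $P(X\in A,\, C\in B)=P(X\in A)P(C\in B)$ for all measurable $A,B$. Taking $B=f^{-1}(d)$ gives
\[
P(X\in A \mid D=d) = \frac{P(X\in A,\, C\in f^{-1}(d))}{P(C\in f^{-1}(d))} = P(X\in A).
\]
Since the right-hand side does not depend on $d$, the same identity for $d'$ yields $P(X\in A\mid D=d)=P(X\in A\mid D=d')$ for every measurable $A$, contradicting (iii).

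I do not expect a genuine obstacle; the only delicate point is bookkeeping. I need (a) measurability of $f^{-1}(d)$, which is automatic for a finite choice space and in general requires $f$ to be measurable (implicit in $D$ being a random variable); and (b) the positivity $P(D=d)>0$ needed for (iii) to make sense. If $C$ takes only one value almost surely, then $D$ has at most one outcome of positive probability, and (iii) fails trivially. It is also worth remarking afterwards that the absence-of-losses assumption enters only implicitly, by ensuring $D$ is defined on every trial, so that $\{D=d\}$ is exactly $\{C\in f^{-1}(d)\}$ without an extra ``no detection'' outcome correlated with $X$.
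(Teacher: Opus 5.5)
Your proof is correct and takes essentially the same approach as the paper. The paper simply invokes the fact that independence of $X$ and $C$ passes to $D=f(C)$; you unpack that step through the preimage identity $\{D=d\}=\{C\in f^{-1}(d)\}$, and you also make explicit the positivity condition $P(D=d)>0$ and the measurability of $f$, which the paper leaves implicit.
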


\begin{proof} If $X$ and $C$ are independent, then $X$ is independent of any deterministic function of $C$. In particular, $X \perp f(C)$. Using $D = f(C)$, this implies $X \perp D$. Consequently, for any measurable set $A$ and any outcomes $d,d'$, one has $P(X \in A \mid D = d) = P(X \in A \mid D = d')$, which contradicts condition (iii).
\end{proof}

The theorem immediately implies that the four properties defining an idealized DCQE scenario cannot all hold simultaneously. Indeed, independence of the choice corresponds to condition (i), deterministic routing to condition (ii), and distinct conditional detection distributions to condition (iii). The absence of losses ensures that all random variables are jointly defined on the same probability space without conditioning on detection events.

We therefore conclude that any physically realizable DCQE architecture must necessarily violate at least one of the following intuitive assumptions:

\begin{itemize}
    \item statistical independence between $X$ and $C$,
    \item absence of losses,
    \item deterministic association between the choice and detection outcome,
    \item existence of distinct conditional detection distributions.
\end{itemize}

This structural constraint is purely probabilistic and does not rely on quantum dynamics or specific experimental details. It provides a minimal framework for analyzing how different DCQE implementations produce their characteristic correlations.

\section{Application to representative DCQE architectures}

We now illustrate how the structural constraint manifests itself in several representative DCQE implementations. Each architecture realizes three of the four intuitive properties introduced in the second section, while necessarily violating the remaining one.

\subsection{Architecture I: Coarse-graining and interference cancellation}

\begin{figure}[h]
    \centering
    \includegraphics[width=0.85\columnwidth]{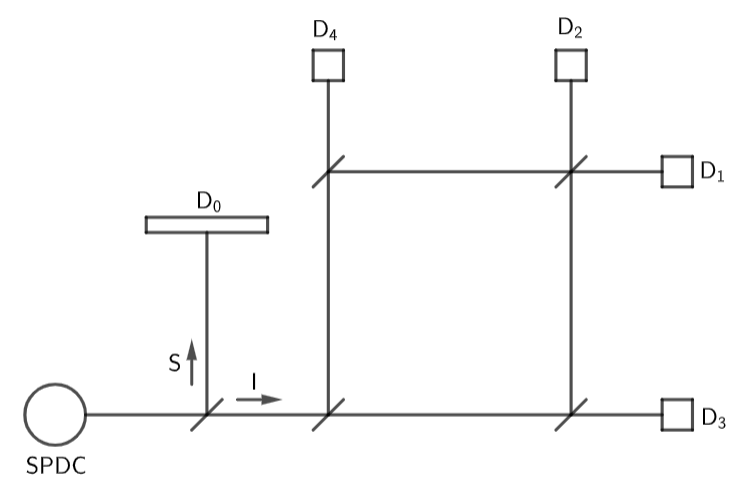}
    \caption{Example of architecture I.}
    \label{fig:ArchitectureI}
\end{figure}

A paradigmatic example is provided by the experiment of Kim \textit{et al.} \cite{Kim2000}. In this setup [\ref{fig:ArchitectureI}], system $I$ is detected by four detectors, denoted $D_1, D_2, D_3,$ and $D_4$. Two of these detectors are associated with the erasure configuration, while the other two correspond to the preservation of which-path information. At the coarse-grained level, one defines two outcomes:
\[
D_{\mathrm{erase}} = \{D_1,D_2\}, \quad D_{\mathrm{preserve}} = \{D_3,D_4\}.
\]
In principle, the evolution is unitary and no losses occur, satisfying the absence-of-losses condition. Moreover, at the coarse-grained level, the detection outcome is deterministically associated with the experimental choice, thus satisfying the deterministic routing property. However, when conditional detection statistics of $S$ are computed with respect to the coarse-grained outcomes, the interference patterns corresponding to $D_1$ and $D_2$ cancel due to their relative phase shift. As a result, one finds
\[
P(X \mid D_{\mathrm{erase}}) = P(X \mid D_{\mathrm{preserve}}),
\]
so that the distinct conditional distributions property is violated at this level of description.

If one instead conditions on the fine-grained outcomes $D_1$ and $D_2$ separately, interference patterns reappear, but the deterministic routing condition is then violated, since both outcomes correspond to the same experimental choice.

\subsection{Architecture II: Nondeterministic routing}

\begin{figure}[h]
    \centering
    \includegraphics[width=0.95\columnwidth]{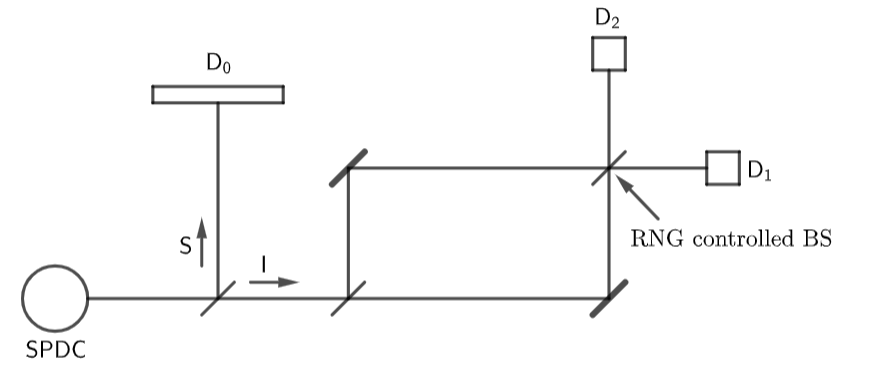}
    \caption{Example of architecture II.}
    \label{fig:ArchitectureII}
\end{figure}

A second class of implementations is exemplified by delayed-choice interferometric experiments in which system $I$ traverses a Mach-Zehnder interferometer whose final beam splitter is inserted or removed according to the experimental choice \cite{Jacques2007}.

In such setups [\ref{fig:ArchitectureII}], the evolution remains unitary and there are no losses in principle. The detection of $I$ occurs at two output detectors, denoted $D_1$ and $D_2$. When coincidence counts with system $S$ are analyzed, the conditional distributions $P(X \mid D_1)$ and $P(X \mid D_2)$ exhibit interference patterns (with reduced visibility) with opposite phases, satisfying the distinct conditional distributions property. However, each experimental choice—corresponding to inserting or removing the final beam splitter—leads to both detection outcomes with nonzero probability. The association between the choice and detection outcome is therefore nondeterministic, violating the deterministic routing condition.
It is important to note that the appearance of complementary interference patterns in the conditional distributions $P(X \mid D_1)$ and $P(X \mid D_2)$ relies on the preservation of coherence between the interferometric paths prior to the final recombination stage. If path distinguishability becomes complete before recombination, the corresponding off-diagonal terms vanish and the conditional interference structure disappears. The structural constraint discussed here therefore applies specifically to configurations in which coherent recombination is maintained up to detection.

\subsection{Architecture III: Loss-induced post-selection}

\begin{figure}[h]
    \centering
    \includegraphics[width=0.95\columnwidth]{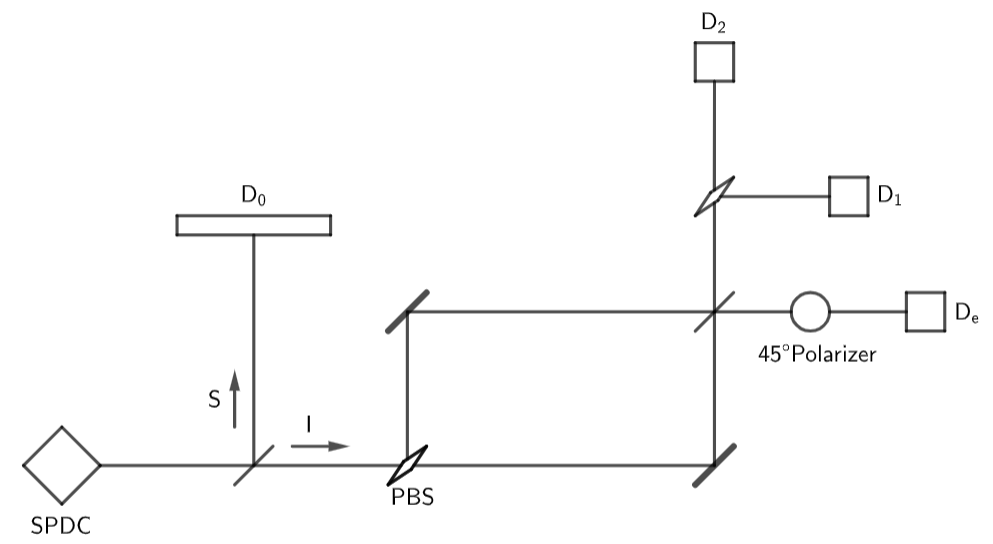}
    \caption{Example of architecture III.}
    \label{fig:ArchitectureIII}
\end{figure}

A third representative architecture employs entanglement in internal degrees of freedom such as polarization to mark and erase which-path information \cite{Ma2016}. In such schemes [\ref{fig:ArchitectureIII}], one detection channel corresponds to preserving path distinguishability, while another channel involves projecting system $I$ onto a superposition state that erases which-path information.

At the coarse-grained level, the detection outcome is deterministically associated with the experimental choice, and conditional statistics of $S$ exhibit interference for the erasure channel and no interference for the preservation channel, satisfying both deterministic routing and distinct conditional distributions. However, the erasure operation necessarily involves a non-unitary projection, which introduces intrinsic losses. Only a subset of the realizations of system $I$ contribute to the erasure detection channel, while the remaining events are physically filtered out by the device.

The resulting interference patterns therefore arise from conditioning on successful transmission through the erasure channel. This form of intrinsic post-selection is implemented by the measurement apparatus itself rather than by an explicit data-filtering procedure performed by the experimenters. From a probabilistic perspective, it corresponds to conditioning on a non-unit probability event, which can induce correlations between otherwise independent variables, analogous to selection effects such as Berkson’s paradox.

In typical polarization-based eraser implementations, the experimental setting is chosen with equal probability between the erasure and preservation configurations. In the preservation configuration, all events are in principle detected, whereas in the erasure configuration the projection onto a superposition polarization state necessarily filters out at least half of the incoming photons. This implies a minimal average loss rate $p_{\min} = \frac{1}{4}$. In \emph{Appendix A}, we explicitly show that this minimal loss rate already suffices to ensure compatibility between the observed conditional detection statistics and the independence assumption $X \perp C$. No additional tuning of loss mechanisms is required. The construction is further generalized to arbitrary (possibly biased) choice probabilities.

More generally, polarization-based eraser implementations rely on projecting system $I$ onto superposition states that erase which-path information within a coherent polarization subspace. The interference patterns observed in coincidence statistics therefore presuppose that coherence between the relevant polarization components is preserved prior to the erasure projection. If decoherence removes this coherence beforehand, the conditional detection asymmetries analyzed here no longer arise. The structural constraint identified in the present work thus applies specifically to implementations in which coherence is maintained up to the stage at which the erasure projection is performed.

\subsection{Violation of the independence assumption}

For completeness, we consider the remaining logical possibility allowed by the structural constraint, namely scenarios in which the independence condition $X \perp C$ is violated while the other three properties are satisfied.

Consider a pair of position-entangled photons $S$ and $I$. Photon $S$ is detected at the spatially resolved detector producing outcome $X$. Photon $I$ propagates through a Mach--Zehnder interferometer whose phase is adjusted such that the two output ports occur with equal probability. This condition ensures that the violation of the independence assumption cannot be attributed to a trivial imbalance between the detection channels, and instead reflects a genuine statistical dependence between the experimental choice and the earlier detection outcome.
The outputs are monitored by two detectors, denoted $D_1$ and $D_2$. No which-path marking is introduced, so the interferometer operates coherently and recombines the two paths unitarily. The evolution is therefore lossless, and each experimental run produces exactly one detection event at either $D_1$ or $D_2$.

Moreover, the detection outcome of $I$ uniquely determines the effective measurement context, so that the association between the detection outcome and the coarse-grained variable $D$ is deterministic. Conditional detection statistics of $S$ exhibit complementary interference patterns when conditioned on $D_1$ and $D_2$, thereby satisfying the distinct conditional distributions property. In this situation, however, the effective choice variable coincides with the detection outcome itself, namely $C \equiv D$.
Consequently, $C$ is manifestly correlated with $X$, and the independence condition $X \perp C$ is violated. 

This dependence is not surprising: the beam splitter is a passive device and does not implement an externally controlled, freely selectable experimental choice. Instead, the effective ``choice'' is generated by the intrinsic quantum randomness of the interferometer itself.

Importantly, quantum mechanics forbids the realization of an active device that would simultaneously recombine the interferometer paths coherently and allow external control over the output port. Such a device would need to implement both a unitary transformation (i.e, preserve coherent recombination) and a nontrivial projection onto selected outputs (i.e, allow external selection of output ports), which is incompatible with linear quantum dynamics. If such a device existed, it could be exploited to enable signaling between correlated systems, in contradiction with the no-signaling principle~\cite{GRW1980}.

Therefore, while scenarios violating independence can formally satisfy the other structural properties, they do not correspond to DCQE architectures, since no which-path marking is introduced on system $I$, and consequently no erasure operation is performed. The observed interference patterns simply result from coherent recombination within the interferometer.

Finally, it is worth emphasizing that arbitrary patterns can be engineered in post-selected coincidence statistics whenever detection outcomes are classically conditioned on prior measurement results. As a simple illustration, one may replace the second beam splitter by an optical switch controlled by a classical processor receiving the detection signal of system $S$. The processor routes system $I$ toward detector $D_1$ whenever $S$ is detected within a predefined region on $D_0$ forming an arbitrary figure, and toward $D_2$ otherwise. After many runs, no structure appears in the marginal distribution at $D_0$, while the coincidence distributions conditioned on $D_1$ and $D_2$ reproduce the chosen figure and its complement, respectively. This example highlights that striking conditional patterns alone do not imply any retrocausal or noncausal influence, but can arise solely from classical conditioning.

\section*{Discussion}

In this work, we have identified a simple incompatibility between four intuitive properties frequently attributed to idealized DCQE scenarios: statistical independence of the choice, absence of losses, deterministic association between the choice and detection outcome, and the existence of distinct conditional detection distributions. This incompatibility is purely probabilistic and does not rely on quantum dynamics or specific experimental implementations.

Several previous studies have emphasized that the apparent paradoxical features of DCQE experiments disappear once the physical and causal structure of the experiment is analyzed in detail \cite{Kastner2019, Fankhauser2020, Waaijer2024}. In a related perspective, it has also been argued that the apparent availability of which-path information in delayed configurations may already be constrained by the structure of the underlying entangled quantum state prior to the final detection stage \cite{Qureshi2020, Qureshi2021, Qureshi2025}.

The present analysis complements these results by identifying minimal probabilistic compatibility relations between structural properties commonly attributed to idealized DCQE scenarios, formulated directly in terms
of observable variables describing the experimental architecture. More broadly, it shows that the apparent tension between delayed choices and earlier detection outcomes arises from implicitly combining structural assumptions that cannot be simultaneously satisfied. From this perspective, the appearance of paradox in DCQE experiments can be understood as a structural illusion generated by mutually incompatible expectations rather than as evidence for any unconventional causal mechanism.

The resulting structural constraint provides a transparent explanation for why realistic DCQE architectures necessarily circumvent at least one of these intuitive assumptions. By applying this framework to representative implementations, we have shown that different experimental schemes achieve their characteristic correlations through distinct mechanisms: coarse-graining that cancels conditional interference patterns, nondeterministic routing, or intrinsic losses that induce automatic post-selection.

A central insight of the analysis is the structural role played by conditioning on selected subsets of events. In particular, polarization-based quantum eraser implementations rely on non-unitary projections that necessarily discard a fraction of the incoming particles. The resulting interference patterns emerge from coincidence statistics conditioned on successful transmission through the erasure channel. From a probabilistic perspective, this corresponds to conditioning on a non-unit-probability event, which can induce correlations between otherwise independent variables. Such selection-induced correlations are formally analogous to statistical phenomena such as Berkson’s paradox.

Importantly, the demonstration presented in \emph{Appendix~A} shows that the minimal loss rates imposed by the physical realization of erasure operations are already sufficient to ensure the existence of a joint distribution compatible with both the observed conditional statistics and the independence assumption between the delayed choice and the earlier detection outcome. No additional fine-tuning of loss mechanisms is required. This highlights that losses in DCQE experiments are not merely experimental imperfections, but constitute an essential structural ingredient enabling the observed correlations.

The framework introduced here also clarifies the status of scenarios in which the independence assumption is violated. While such setups can formally reproduce conditional interference patterns without losses or nondeterministic routing, they do not involve any controlled marking and erasure of which-path information and therefore fall outside the class of quantum eraser experiments. Their inclusion nevertheless completes the logical landscape implied by the structural constraint.

In summary, by isolating the minimal probabilistic ingredients underlying DCQE experiments, this work provides a unifying and conceptually transparent perspective that clarifies how conditional interference patterns arise in realistic implementations and removes the appearance of paradox often associated with such experiments.

\section*{Appendix A: Compatibility of losses with the independence assumption}

In this Appendix, we show that intrinsic losses present in polarization-based eraser architectures are sufficient to restore compatibility between the observed conditional detection statistics and the independence assumption $X \perp C$. We treat the general case where the experimental choice between erasure and preservation is implemented with arbitrary probability.

\subsection*{Extended detection variable}

To account for loss events, we extend the detection variable $D$ by introducing an additional outcome $L$ corresponding to cases where system $I$ is not detected. Thus, $D \in \{D_{\mathrm{erase}}, D_{\mathrm{preserve}}, L\}$. We denote by
\[
q := P(C=\mathrm{erase}), \qquad 1-q := P(C=\mathrm{preserve})
\]
the probabilities associated with the two experimental configurations. Let $p := P(D=L)$
be the total loss rate. By construction of architecture III, losses occur exclusively in the erasure configuration, while all events in the preservation configuration are detected.

\subsection*{Marginal distributions implied by independence}

Assuming statistical independence between $X$ and $C$, we have
\[
P(X,C)=P(X)P(C).
\]
Hence,
\[
P_{X,C}=
\begin{array}{c|cc}
 & \mathrm{erase} & \mathrm{preserve} \\ \hline
X & q P(X) & (1-q) P(X)
\end{array},
\]
where $P(X)$ denotes the unconditional distribution of detection positions of system $S$. The detection-choice marginal distribution follows directly from the structure of the setup:
\[
P_{C,D}=
\begin{array}{c|ccc}
 & D_{\mathrm{erase}} & D_{\mathrm{preserve}} & L \\ \hline
\mathrm{erase} & q-p & 0 & p \\
\mathrm{preserve} & 0 & 1-q & 0
\end{array}.
\]

\subsection*{Constraint from distinct conditional distributions}

Architecture III is designed such that conditional statistics of $S$ differ between the erasure and preservation channels:
\[
P(X \mid D_{\mathrm{erase}}) \neq P(X \mid D_{\mathrm{preserve}}).
\]
In the idealized limit of perfect erasure, interference is fully suppressed in coincidence with $D_{\mathrm{preserve}}$, while maximal modulation appears in coincidence with $D_{\mathrm{erase}}$. This imposes a strong asymmetry: certain subsets of detection positions $A$ satisfy
\[
P(X \in A \mid D_{\mathrm{erase}})=0,
\qquad
P(X \in A \mid D_{\mathrm{preserve}})>0.
\]
This situation maximally constrains compatibility with independence and therefore provides a worst-case scenario.

\subsection*{Compatibility with a joint distribution}

We now ask whether there exists a joint distribution $P_{X,C,D}$ compatible with the marginals $P_{X,C}$ and $P_{C,D}$ and with the conditional asymmetry above. Let us decompose probabilities according to the three outcomes of $D$.

\paragraph*{Case $D=D_{\mathrm{preserve}}$.}

Conditionally on $D=D_{\mathrm{preserve}}$, one must have $C=\mathrm{preserve}$ almost surely. The distribution of $X$ is therefore simply $P(X)$ restricted to a total probability mass $1-q$. This introduces no constraint on $p$.

\paragraph*{Case $D=D_{\mathrm{erase}}$.}

Conditionally on $D=D_{\mathrm{erase}}$, one must have $C=\mathrm{erase}$ almost surely. The total probability mass is
\[
P(D=D_{\mathrm{erase}})=q-p,
\]
which requires $p \le q$.

\paragraph*{Case $D=L$.}

Conditionally on $D=L$, one also has $C=\mathrm{erase}$ almost surely. The distribution of $X$ over loss events must absorb the remaining probability weight associated with the erasure configuration. Since the total probability mass associated with $C=\mathrm{erase}$ is $q$, and the detected erasure events carry mass $q-p$, the loss channel carries mass $p$.

To compensate for the asymmetry imposed by the conditional interference pattern in the detected erasure channel, the loss channel must contribute at least half of the probability mass associated with $C=\mathrm{erase}$. This yields the constraint
\[
p \ge \frac{q}{2}.
\]

\subsection*{Allowed range of loss rates}

Combining the constraints from the three cases, we obtain
\[
\boxed{\frac{q}{2} \le p \le q}.
\]
Therefore, a minimal loss rate $p_{\min}=\frac{q}{2}$ is sufficient to ensure compatibility between the observed conditional detection statistics and the independence assumption $X \perp C$.

\subsection*{Equiprobable choice}

When the two experimental configurations are chosen with equal probability, $q=\tfrac12$, one finds
\[
p_{\min}=\frac14.
\]
This value coincides with the unavoidable $50\%$ filtering associated with the projection required to erase which-path information in polarization-based implementations.

\subsection*{Conclusion}

This explicit construction shows that the intrinsic losses imposed by the erasure operation are not incidental imperfections, but are structurally sufficient to reconcile the observed conditional interference patterns with statistical independence of the delayed choice.

Consequently, architecture III circumvents the no-go constraint not by violating independence, deterministic routing or distinct conditional distributions, but by necessarily introducing post-selection through physically enforced losses.

\bibliographystyle{apsrev4-2}
\bibliography{main}

\end{document}